\theoremstyle{plain}
\newtheorem{theorem}{Theorem}%[section]
\newtheorem{lemma}[theorem]{Lemma}
\newtheorem{remark}[theorem]{Remark}
\theoremstyle{definition}
\newcounter{step} 
\def\bq{\begin{eqnarray}}
\def\eq{\end{eqnarray}}
\def\bqq{\begin{align*}}
\def\eqq{\end{align*}}
\def\nn{\nonumber}
\def\minus {\backslash}
\renewcommand{\Re}{\operatorname{Re}}
\newcommand\1{{\ensuremath {\mathds 1} }}
\def\cF {\mathcal{F}}
\def\cH{\mathcal{H}}
\def\R {\mathbb{R}}
\def\cN {\mathcal{N}}
\def\cE {\mathcal{E}}
\def\gH{\mathcal{H}}
\def\R {\mathbb{R}}
\def\d{{\rm d}}
\def\bT{{\mathbb{T}}}
\def\bZ{{\mathbb{Z}}}
\newcommand{\bH}{\mathbb{H}}
\title[] {Binding energy of homogeneous Bose gas}
\author[P. T. Nam]{Phan Th\`anh Nam}
\address{Department of Mathematics and Statistics, Masaryk University, Kotl\'a\v rsk\'a 2, 611 37 Brno, Czech Republic} 
\email{ptnam@math.muni.cz}
\begin{document}

\begin{abstract}   We compute the energy needed to add or remove one particle from a homogeneous system of $N$ bosons on a torus. We focus on the mean-field limit when $N$ becomes large and the strength of particle interactions is proportional to $N^{-1}$, which allows us to justify Bogoliubov's approximation.  
\end{abstract}

\date{\today}

\maketitle

\setcounter{tocdepth}{1}
%\tableofcontents
%\addcontentsline{toc}{section}{Contents}

\section{Introduction}

In 1947, Bogoliubov \cite{Bogoliubov-47b} proposed an important approximation theory for the low-energy behavior of weakly interacting Bose gases. Bogoliubov's approximation has been justified rigorously in many cases, including the ground state energy of one and two-component Bose gases  \cite{LieSol-01,LieSol-04,Solovej-06}, the Lee-Huang-Yang formula of dilute gases \cite{ErdSchYau-08,GiuSei-09,YauYin-09}, and the excitation spectrum of mean-field Bose gases  \cite{Seiringer-11,GreSei-13,LewNamSerSol-15,DerNap-13,NamSei-15}.

In this note, we shall show that Bogoliubov's approximation can be used to compute the {\em binding energy}, which is the energy needed to add or remove one particle from the system. An interesting feature of our calculation is that the difference between two quantities can be computed up to an accuracy which goes beyond the separate knowledge of each quantity. 

We consider a homogeneous system of $N$ bosons on the unit torus $\bT^d$, $d\ge 1$. The system is governed by the Hamiltonian
$$
H_{\lambda,N} = \sum_{j=1}^N -\Delta_{x_j} +\lambda \sum_{1\le j<k\le N} w(x_j-x_k)
$$
which acts on the bosonic Hilbert space
$$\gH^N=L^2_{\rm sym}((\bT^d)^N).$$
Here the kinetic operator $-\Delta$ is the usual Laplacian (with periodic boundary condition). The interaction potential $w:\mathbb{T}^d \to \mathbb{R}$ is an even, bounded function with nonnegative Fourier transform 
$$\widehat w(p)=\int_{\mathbb{T}^d} w(x)e^{-ip\cdot x} \d x \ge 0, \quad \forall p\in (2\pi \mathbb{Z})^d.$$ 
The parameter $\lambda>0$ stands for the strength of interactions. We will focus on the weak interaction/mean-field regime
$$ \lambda\sim N^{-1}.$$

Let $E(\lambda,N)$ be the ground state energy of $H_{\lambda,N}$. It was proved in \cite{Seiringer-11} that when $N\to \infty$ and $\lambda N \to 1$, 
\bq \label{eq:EN-GSE}
E(\lambda,N)= \frac{\lambda}{2} N(N-1) \widehat w(0) + e_{\rm B} + o(1)
\eq
where 
\bq \label{eq:Bog-GSE}
e_{\rm B} = -  \frac{1}{2} \sum_{p\in (2\pi \mathbb{Z})^d \minus \{0\}} \left( |p|^2 + \widehat w(p) - \sqrt{|p|^4+2 |p|^2 \widehat w(p)}\right).
\eq
Note that the sum in \eqref{eq:Bog-GSE} is finite since the summands is smaller than $|\widehat w(p)|^2$. In fact, $e_{\rm B}$ is the ground state energy of a quadratic Hamiltonian obtained from Bogoliubov's approximation (as we will explain). The energy expansion \eqref{eq:EN-GSE} was proved later in \cite{LewNamSerSol-15} by a different method, which gives also the convergence of the ground state.  See also \cite{GreSei-13,NamSei-15,DerNap-13,RouSpe-16,BocBreCenSch-17} for related results.  

In this note, we are interested in the binding energy $E(\lambda, N-1)-E(\lambda, N)$. From \eqref{eq:EN-GSE}, one gets immediately that 
\bq \label{eq:binding-leading}
E(\lambda, N)-E(\lambda, N-1)= \lambda (N-1) \widehat w(0)  + o(1).
\eq
Our new result is the next order expansion in \eqref{eq:binding-leading}. We have 

\begin{theorem}[Binding energy] \label{thm:main} When $N\to \infty$ and $\lambda N \to 1$, we have 
\bq \label{eq:main-result}
E(\lambda, N)-E(\lambda, N-1)=\lambda (N-1)\widehat w(0) + \frac{1}{N} \left( e_{\rm B} - \sum_{p\ne 0}  \frac{|p|^2 \alpha_p^2}{1-\alpha_p^2} + o(1)\right)
\eq
where
\bq \label{eq:alphap}
\alpha_p= \frac{\widehat w(p)}{p^2+ \widehat w(p) + \sqrt{|p|^4+ 2|p|^2 \widehat w(p)}}.
\eq
\end{theorem}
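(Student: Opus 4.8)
The plan is to compare the $N$- and $(N-1)$-particle systems \emph{directly}, by adding or removing a single condensate particle, instead of subtracting two copies of the expansion \eqref{eq:EN-GSE}. This is forced on us: the remainder in \eqref{eq:EN-GSE} is only $o(1)$, which would completely hide the $O(1/N)$ term in \eqref{eq:main-result}. I work on the bosonic Fock space over $L^2(\bT^d)$, denote by $a_p^\ast,a_p$ the creation/annihilation operators of the plane waves $e^{ip\cdot x}$, $p\in(2\pi\bZ)^d$, set $\mathcal{N}_+=\sum_{p\ne0}a_p^\ast a_p$ and $N_0=a_0^\ast a_0$, and view $H_{\lambda,M}$ as the restriction to the $M$-particle sector of
\begin{equation}
H=\sum_p |p|^2 a_p^\ast a_p+\frac{\lambda}{2}\sum_{p,q,k}\widehat w(k)\,a_{p+k}^\ast a_{q-k}^\ast a_q a_p .
\end{equation}
Let $\Psi_M$ be a normalized ground state of $H_{\lambda,M}$.

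\textbf{Two variational bounds.} Since $a_0\Psi_N$ is a nonzero $(N-1)$-particle vector, the variational principle for $H_{\lambda,N-1}$, combined with $H\Psi_N=E(\lambda,N)\Psi_N$ and the operator identity $a_0^\ast H a_0=a_0^\ast a_0\,H+a_0^\ast[H,a_0]$, yields the \emph{exact} lower bound
\begin{equation}
E(\lambda,N)-E(\lambda,N-1)\ \ge\ \frac{\langle\Psi_N,\,a_0^\ast[a_0,H]\,\Psi_N\rangle}{\langle\Psi_N,\,N_0\,\Psi_N\rangle}.
\end{equation}
Using $a_0^\ast\Psi_{N-1}$ as a trial vector for $H_{\lambda,N}$ produces the symmetric upper bound, with the expectation taken in $\Psi_{N-1}$ and $N_0+1$ in the denominator. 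It therefore suffices to evaluate these two ratios to precision $o(1/N)$.

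\textbf{Evaluation.} A direct computation (the kinetic part drops out since $|0|^2=0$) gives $[a_0,H]=\lambda\sum_{p,q}\widehat w(p)\,a_{p+q}^\ast a_q a_p$, hence $a_0^\ast[a_0,H]=\lambda\sum_{p,q}\widehat w(p)\,a_0^\ast a_{p+q}^\ast a_q a_p$. I would split this sum according to the number of legs in the condensate mode. The fully condensed term $p=q=0$ is $\lambda\widehat w(0)\,N_0(N_0-1)$; dividing by $\langle N_0\rangle=N-\langle\mathcal{N}_+\rangle$ and using $\langle N_0(N_0-1)\rangle=\langle N_0\rangle^2-\langle N_0\rangle+\mathrm{Var}(N_0)$ recovers the leading term $\lambda(N-1)\widehat w(0)$ of \eqref{eq:binding-leading} together with a depletion correction $-\tfrac{\widehat w(0)}{N}\langle\mathcal{N}_+\rangle$. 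The terms with exactly one more condensate leg form a diagonal family $\lambda\sum_{p\ne0}(\widehat w(0)+\widehat w(p))\,N_0\,a_p^\ast a_p$ and an anomalous family $\lambda\sum_{p\ne0}\widehat w(p)\,(a_0^\ast)^2 a_p a_{-p}$. Inserting the Bogoliubov correlations $\langle a_p^\ast a_p\rangle\to\alpha_p^2/(1-\alpha_p^2)$, $\langle a_p a_{-p}\rangle\to-\alpha_p/(1-\alpha_p^2)$ and $N_0\approx N$, the $\widehat w(0)$ piece of the diagonal family cancels the depletion correction, and the remainder collapses to
\begin{equation}
\frac1N\sum_{p\ne0}\widehat w(p)\left(\frac{\alpha_p^2}{1-\alpha_p^2}-\frac{\alpha_p}{1-\alpha_p^2}\right)=-\frac1N\sum_{p\ne0}\frac{\widehat w(p)\,\alpha_p}{1+\alpha_p}.
\end{equation}
A short manipulation from the definitions of $e_{\rm B}$ and $\alpha_p$ (reducing to the pointwise identity $\widehat w(p)(1-\alpha_p)^2=2|p|^2\alpha_p$) shows that the right-hand side equals $\tfrac1N\bigl(e_{\rm B}-\sum_{p\ne0}|p|^2\alpha_p^2/(1-\alpha_p^2)\bigr)$, which is precisely the coefficient in \eqref{eq:main-result}. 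Running the same computation in $\Psi_{N-1}$ changes every correlation only at subleading order, so the upper bound yields the identical value up to $o(1/N)$ and the two bounds close.

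\textbf{Main obstacle.} The delicate step is the \emph{Evaluation}: I must replace the true correlation functions of $\Psi_N$ and $\Psi_{N-1}$ by their Bogoliubov values \emph{inside the $p$-sums weighted by $\widehat w(p)$}, with total error $o(1)$ before the factor $\lambda\sim N^{-1}$. This demands quantitative convergence of the one- and two-particle reduced density matrices of the ground state to the quasi-free Bogoliubov state, control of $\langle\mathcal{N}_+\rangle$ and $\mathrm{Var}(N_0)$, and summability of the tails (which holds since $\widehat w$ is bounded, $|p|\ge2\pi$ for $p\ne0$, and $\alpha_p=O(\widehat w(p)/|p|^2)$). These are exactly the convergence statements behind \eqref{eq:EN-GSE} from \cite{Seiringer-11,LewNamSerSol-15}; the real work is to extract them with the uniformity needed here, most notably for the anomalous pair correlations $\langle (a_0^\ast)^2 a_p a_{-p}\rangle$.
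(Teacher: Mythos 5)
Your skeleton coincides exactly with the paper's: the same two trial states ($a_0\Psi_N$ for the lower bound, $a_0^*\Psi_{N-1}$ for the upper bound, cf.\ \eqref{eq:binding-lw-0} and \eqref{eq:up-0}), the same commutator identity, and the correct final algebra --- your identity $\widehat w(p)(1-\alpha_p)^2=2|p|^2\alpha_p$ does hold and your value $-\sum_{p\ne0}\widehat w(p)\alpha_p/(1+\alpha_p)$ agrees with $e_{\rm B}-\sum_{p\ne 0}|p|^2\alpha_p^2/(1-\alpha_p^2)$. But there is a genuine gap at precisely the step you flag as the ``main obstacle,'' and the route you propose for closing it (quantitative convergence of the one- and two-particle reduced density matrices, ``most notably for the anomalous pair correlations'') is both unproven and stronger than what is needed; indeed the paper itself remarks in Section \ref{sec:general} that such strong quantitative control of $\Phi_N\to\Phi^{(1)}$ ``seems out of reach for existing methods.'' The missing idea is soft, not quantitative: (a) since the quantity $\langle\Psi_N,[\bH_\lambda,a_0^*]a_0\Psi_N\rangle$ is manifestly real (it is a difference of two Rayleigh quotients), the anomalous term $a_p^*a_{-p}^*$ may be symmetrized into $\tfrac12(a_p^*a_{-p}^*+a_pa_{-p})$; (b) after the c-number reduction via $U_N$ (which you invoke only informally as $N_0\approx N$, and which requires the identities \eqref{eq:action} together with the moment bounds of Lemma \ref{lem:cN+}), the symmetrized observable is exactly $\bH_{\rm B}-\sum_{p\ne0}|p|^2a_p^*a_p$; (c) the expectation $\langle\Phi_N,\bH_{\rm B}\Phi_N\rangle\to e_{\rm B}$ is already a byproduct of the energy proof \eqref{eq:cv-PhiN}, and the kinetic term converges because it is form-bounded relative to $\bH_{\rm B}$ --- proved by the comparison argument with $w$ replaced by $2w$ --- combined with the form-domain convergence $\Phi_N\to\Phi^{(1)}$ of Theorem \ref{thm:GSE}. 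With (a)--(c) no rate of convergence and no separate control of $\langle a_pa_{-p}\rangle$ or of $\langle a_p^*a_p\rangle$ is ever required; without them, your plan stalls exactly where you admit it does.

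A secondary omission: in your leg-counting you enumerate the fully condensed term and the terms with one extra condensate leg, but never dispose of the purely excited cubic terms $\lambda\sum\widehat w(\ell)\,a_0^*a_{p+q}^*a_qa_p$ with $p,q,p+q\ne0$. These are not negligible for free: the paper bounds them by $CN^{-1/2}$ in \eqref{eq:R3} via Cauchy--Schwarz and the second-moment bound $\langle\Psi_N,\cN_+^2\Psi_N\rangle\le C$ of Lemma \ref{lem:cN+} (your ``control of $\langle\mathcal{N}_+\rangle$ and $\mathrm{Var}(N_0)$'' gestures at this but the estimate itself must be run). Your treatment of the $\widehat w(0)$ bookkeeping, by contrast, is correct --- in fact the cancellation you describe is exact, since $\sum_q a_0^*a_q^*a_qa_0=N_0(\cN-1)$ on $\gH^N$ gives $\lambda\widehat w(0)(N-1)$ identically after dividing by $\langle N_0\rangle$; the paper simply avoids all of this by normalizing $\widehat w(0)=0$ at the outset. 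So: same strategy as the paper, correct target identity, but the decisive mechanism that makes the evaluation rigorous is absent, and what you propose in its place is the hard (and avoidable) version of the problem.
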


The proof of Theorem \ref{thm:main} is based on a trial state argument inspired from the study in \cite{Bach-91} on the ionization energy of bosonic atoms. While the analysis in \cite{Bach-91} focuses on the leading order, here we can go to the next order by a careful analysis using Bogoliubov's approximation. 

Note that for  the homogeneous Bose gas, the condensate is described by the constant function and this enables several simplifications. We leave open the question of generalizing Theorem \ref{thm:main} to inhomogeneous systems. See Section \ref{sec:general} for further discussion. 

The rest of the note is organized as follows. In Section \ref{sec:Bog-app} we will quickly recall Bogoliubov's approximation. In Section \ref{sec:GSE} we give a short proof of \eqref{eq:EN-GSE}, using ideas from \cite{Seiringer-11,LewNamSerSol-15} with some simplifications. Then we prove Theorem \ref{thm:main} in Section \ref{sec:main-thm}. Further extension is discussed in Section  \ref{sec:general}. 

\medskip

\noindent\textbf{Acknowledgement.} I would like to thank Mathieu Lewin and Robert Seiringer for helpful discussions. 

\section{Bogoliubov approximation} \label{sec:Bog-app}

It is convenient to turn to the grand canonical setting. Let us introduce the Fock space 
$$
\cF=\mathbb{C} \oplus \gH \oplus \gH^2 \oplus \cdots = \bigoplus_{n=0}^\infty \gH^n.
$$
Let $a_p^*=a^*(u_p)$ and $a_p=a(u_p)$ be the usual creation and annihilation operators on Fock space with $u_p(x)=e^{ipx}$, $p\in (2\pi \mathbb{Z})^d$. They satisfy the canonical commutation relation (CCR)
\bq \label{eq:CCR}
[a_p,a_q]=0=[a_p^*,a_q^*], \quad [a_p,a_q^*]=\delta_{p,q}.
\eq
Henceforth, we denote 
$$[A,B]=AB-BA.$$  
We can extend $H_{\lambda,N}$ to a grand canonical Hamiltonian on Fock space as
\bq \label{eq:grand-Hamiltonian}
\bH_\lambda=\sum_{p}  |p|^2 a_p^* a_p +\frac{\lambda}{2} \sum_{p,q,\ell } \widehat w(\ell) a^*_{p-\ell}a^*_{q+\ell}a_p a_q.
\eq
Here the sums are always taken over the momentum space $(2\pi \mathbb{Z})^d$. 

Bogoliubov's approximation consists of two simplifications. First, substituting all operators $a_0$ and $a_0^{*}$ in \eqref{eq:grand-Hamiltonian} by the scalar number $\sqrt{N}$ \footnote{Strictly speaking, for $a_0^*a_0^*a_0a_0$ we should rewrite it as $(a_0^*a_0)^2-a_0^*a_0$ before doing the substitution} (this so-called {\em c-number substitution} is motivated by the Bose-Einstein condensation $\langle a_0^* a_0 \rangle \sim N$). Second, ignoring all interaction terms which are coupled with constants $\lambda \sqrt{N}$ or $\lambda$ (because $\lambda \sim N^{-1}$). All this leads to the formal expression
\bq \label{eq:Bog-formal}
\bH_\lambda \approx \frac{1}{2}\lambda N(N-1) \widehat w(0) + \bH_{\rm B}
\eq
where $\bH_{\rm B}$ is the quadratic Hamiltonian (so-called Bogoliubov Hamiltonian)
$$
\bH_{\rm B}= \sum_{p\ne 0} \left( \Big( |p|^2+\widehat w(p)  \Big)  a_p^* a_p + \frac{1}{2}\widehat w(p) \Big( a^*_p a^*_{-p}+ a_p a_{-p}\Big)\right).
$$

Note that $\bH_{\rm B}$ can be restricted naturally to the Fock space of excited particles 
$$
\cF_+=\mathbb{C} \oplus \gH_+ \oplus \gH_+^2 \oplus \cdots = \bigoplus_{n=0}^\infty \gH_+^n, \quad \gH_+= \{u_0\}^\bot.
$$

A fundamental fact observed by Bogoliubov \cite{Bogoliubov-47b} is that the quadratic Hamiltonian $\bH_{\rm B}$ is {\em diagonalizable}. More precisely, we can write 
\bq \label{eq:Bog-dia}
 \bH_{\rm B} = e_{\rm B} + \sum_{p\ne 0} e_p c_p^* c_p 
\eq
where $e_{\rm B}$ is given in \eqref{eq:Bog-GSE}, $e_p$ is the {elementary excitation}
$$
e_p= \sqrt{|p|^4+2|p|^2 \widehat w(p)}
$$
and $c_p$ is the {modified annihilation operator}
\bq \label{eq:cp}
c_p= \frac{a_p + \alpha_p a_{-p}^*}{\sqrt{1-\alpha_p^2}}%, \quad \alpha_p= \frac{\widehat w(p)}{p^2+ \widehat w(p) + \sqrt{|p|^4+ 2|p|^2 \widehat w(p)}}.
\eq
(with $\alpha_p$ defined by \eqref{eq:alphap}).
Consequently, $e_{\rm B}$ is the ground state energy of $\bH_{\rm B}$. Moreover, $\bH_{\rm B}$ has a unique ground state
\bq \label{eq:Phi1}
\Phi^{(1)}= U_{\rm B} |0\rangle \in \cF_+
\eq
where $|0\rangle = 1 \oplus 0 \oplus 0 \cdots $ is the vacuum state in Fock space and $U_{\rm B}$ is a unitary operator on $\cF_+$ satisfying 
\bq \label{eq:Bog-trans}
U_{\rm B}^* c_p U_{\rm B}=a_p, \quad U_{\rm B}^* a_p U_{\rm B} = \frac{a_p-\alpha_p a_{-p}^*}{\sqrt{1-\alpha_p^2}} \quad \forall p\ne 0.
\eq

In the next section, we will justify \eqref{eq:Bog-formal} rigorously, and in particular give a proof of \eqref{eq:EN-GSE}.

\section{Ground state properties} \label{sec:GSE}

In our proof of Theorem \ref{thm:main}, we need not only the convergence of energy \eqref{eq:EN-GSE} but also the convergence of ground state. To discuss the latter, as in \cite{LewNamSerSol-15}, we introduce the unitary operator 
$$U_N: \gH^N\to \cF_+^{\le N}=\bigoplus_{n=0}^N \cH_+^n$$ 
which is defined by 
\begin{align} \label{eq:UN}
U_N = \sum_{j=0}^N Q^{\otimes j} \left(\frac{a_0^{N-j}}{\sqrt{(N-j)!}} \right), \quad U_N^*= \bigoplus_{j=0}^N \left(\frac{(a_0^*)^{N-j}}{\sqrt{(N-j)!}} \right).
\end{align}
Here $Q=\1-|u_0\rangle \langle u_0|$. The role of $U_N$ is to implement the c-number substitution. To be precise, we have the identities (see \cite[Proposition 4.2]{LewNamSerSol-15})
\begin{align} \label{eq:action}
U_N a_p^* a_p U_N^*&=  a_p^* a_p,\quad U_N a_p^* a_0  U_N^* =  a_p^* \sqrt{N-\cN_+}, \quad \forall p \ne 0,
\end{align}

In this section, we recall the following results, taken from \cite{Seiringer-11} and  \cite{LewNamSerSol-15}. 

\begin{theorem}[Ground state properties] \label{thm:GSE} When $N\to \infty$ and $\lambda N \to 1$, then \eqref{eq:EN-GSE} holds true. Moreover, if $\Psi_N$ is the ground state of $H_{\lambda,N}$ and $\Phi^{(1)}$ is the ground state of the Bogoliubov Hamiltonian $\bH_{\rm B}$, then $U_N \Psi_N \to \Phi^{(1)}$ in the quadratic form domain of $\bH_{\rm B}$.
\end{theorem}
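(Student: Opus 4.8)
The plan is to follow the excitation-map strategy of \cite{LewNamSerSol-15} and make the c-number substitution of Section \ref{sec:Bog-app} rigorous through the unitary $U_N$. The first step is to conjugate the Hamiltonian and rewrite it as a perturbation of $\bH_{\rm B}$. Using the action identities \eqref{eq:action} together with momentum conservation on the torus (which annihilates every interaction term containing exactly one or exactly three factors $a_0,a_0^*$, since an unbalanced single nonzero momentum cannot be compensated), one obtains a decomposition of the form
\[
U_N \Big( H_{\lambda,N} - \tfrac{\lambda}{2} N(N-1)\widehat w(0) \Big) U_N^* = \bH_{\rm B} + \mathcal{E}_N
\]
on $\cF_+^{\le N}$, where $\mathcal{E}_N$ collects three kinds of error: (i) the mismatch between the exact factors $\sqrt{N-\cN_+}$ produced by \eqref{eq:action} and the constant $\sqrt{N}$ built into $\bH_{\rm B}$; (ii) the cubic terms, three excited operators coupled to one condensate factor, of size $\lambda\sqrt{N}\sim N^{-1/2}$; and (iii) the quartic interaction among excited particles, of size $\lambda\sim N^{-1}$. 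The whole problem then reduces to controlling $\mathcal{E}_N$ on the relevant states.

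The second step is the key a priori bound, namely complete condensation $\langle \Psi_N, U_N^*\,\cN_+\, U_N \Psi_N\rangle = O(1)$. I would start from the nonnegativity of $\widehat w$, which lets one bound the interaction below by $-C\lambda\cN$ after normal ordering; on the $N$-particle sector this already gives the coercivity $\langle\cN_+\rangle\le (E(\lambda,N)-\tfrac\lambda2 N(N-1)\widehat w(0))/\inf_{p\ne0}|p|^2 + (\text{errors})$, using the spectral gap $\inf_{p\ne0}|p|^2>0$ of the torus. The crude version of this only yields $o(N)$, so I would bootstrap: once the upper bound below gives $E(\lambda,N)-\tfrac\lambda2 N(N-1)\widehat w(0)=O(1)$, the coercivity of the excitation Hamiltonian combined with a Cauchy--Schwarz estimate of the cubic terms (iii) sharpens this to $O(1)$, and a further iteration controls the second moment $\langle\cN_+^2\rangle$, which the estimates on $\mathcal{E}_N$ require.

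With these tools the energy expansion \eqref{eq:EN-GSE} follows from matching bounds. For the upper bound I would test $H_{\lambda,N}$ against $U_N^*\widetilde\Phi$, where $\widetilde\Phi$ is the truncation of the quasi-free state $\Phi^{(1)}$ to $\cF_+^{\le N}$; because $\Phi^{(1)}$ has Gaussian tails in $\cN_+$ the truncation costs only $o(1)$, and the size estimates above make $\langle\widetilde\Phi,\mathcal{E}_N\widetilde\Phi\rangle=o(1)$, so that $E(\lambda,N)\le \tfrac\lambda2 N(N-1)\widehat w(0)+e_{\rm B}+o(1)$. For the matching lower bound I would use the $\cN_+$-control to show $\mathcal{E}_N\ge -o(1)$ as a quadratic form on low-energy states: the substitution error (i) is $O(\langle\cN_+^2\rangle/N)$, the cubic terms (ii) are absorbed by $\varepsilon$ times the positive kinetic part of $\bH_{\rm B}$ plus $\varepsilon^{-1}N^{-1}\langle\cN_+\rangle$, and the quartic term (iii) is $O(N^{-1})$; together with $\bH_{\rm B}\ge e_{\rm B}$ this closes the bound.

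Finally, for the convergence of the ground state I would exploit the diagonalization \eqref{eq:Bog-dia}: $\bH_{\rm B}-e_{\rm B}=\sum_{p\ne0}e_p c_p^*c_p\ge 0$ has a one-dimensional kernel spanned by $\Phi^{(1)}$, and $\min_{p\ne0}e_p>0$ provides a spectral gap. Having shown $\langle U_N\Psi_N,\bH_{\rm B}U_N\Psi_N\rangle\to e_{\rm B}$, I would write $U_N\Psi_N=c_N\Phi^{(1)}+\Phi_N^\perp$ with $\Phi_N^\perp\perp\Phi^{(1)}$; the gap forces $(\min_{p\ne0}e_p)\,\|\Phi_N^\perp\|^2\le \langle U_N\Psi_N,(\bH_{\rm B}-e_{\rm B})U_N\Psi_N\rangle\to 0$, so $\|\Phi_N^\perp\|\to0$ and hence, up to a phase, $U_N\Psi_N\to\Phi^{(1)}$, with convergence in the form domain following from $\langle U_N\Psi_N,\bH_{\rm B}U_N\Psi_N\rangle\to\langle\Phi^{(1)},\bH_{\rm B}\Phi^{(1)}\rangle$. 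I expect the main obstacle to be the second step together with the cubic part of the lower bound: securing the $O(1)$ bound on $\langle\cN_+\rangle$ and $\langle\cN_+^2\rangle$ with constants uniform as $\lambda N\to1$, and then absorbing the $N^{-1/2}$ cubic terms without spoiling the $o(1)$ precision, is where the genuine analysis lies.
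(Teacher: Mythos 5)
Your proposal follows essentially the same route as the paper: the excitation map $U_N$ of \eqref{eq:UN} with the action identities \eqref{eq:action}, a priori bounds on the first two moments of $\cN_+$ in the ground state, reduction of $\langle \Psi_N, H_{\lambda,N}\Psi_N\rangle$ to $\langle \Phi_N,\bH_{\rm B}\Phi_N\rangle+o(1)$, a matching upper bound from the truncated Bogoliubov ground state $\1^{\le N}\Phi^{(1)}$ pulled back by $U_N^*$, and finally the spectral gap of $\bH_{\rm B}$ (visible from the diagonalization \eqref{eq:Bog-dia}) to upgrade the energy convergence \eqref{eq:cv-PhiN} to convergence of $U_N\Psi_N$ in the form domain. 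Two points deserve flagging. First, your parenthetical claim that momentum conservation annihilates every interaction term containing exactly one factor $a_0,a_0^*$ is false: with $q=0$ and $p\ne 0\ne \ell\ne p$ the term $\widehat w(\ell)\,a^*_{p-\ell}a^*_{\ell}a_p a_0$ conserves momentum and survives; these are precisely the cubic terms you then list as error (ii), so your operative decomposition is unaffected, and the paper bounds exactly these terms by $O(N^{-1/2})$ via Cauchy--Schwarz in \eqref{eq:R3}. What momentum conservation does kill are the terms with exactly \emph{three} factors of $a_0$, i.e.\ a single excited operator.

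Second, the thinnest spot in your sketch is the bound $\langle\Psi_N,\cN_+^2\Psi_N\rangle\le C$: merely ``iterating'' the coercivity \eqref{eq:leading-lower} does not produce second moments, because for $k\ge 2$ one no longer has a variational characterization to play against. The paper's device (Lemma \ref{lem:cN+}) uses the Schr\"odinger equation $H_{\lambda,N}\Psi_N=E(\lambda,N)\Psi_N$ to derive the identity \eqref{eq:identity-cN+-HN}, which relates $\langle\Psi_N,\cN_+(H_{\lambda,N}-E)\cN_+\Psi_N\rangle$ to the expectation of $\cN_+[H_{\lambda,N},\cN_+]$; the commutator receives contributions only from interaction terms mixing zero and nonzero modes, and Cauchy--Schwarz then closes the bootstrap. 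Your ``further iteration'' is presumably meant to be this, but it requires the eigenvalue equation, not just energy coercivity, so you should make that ingredient explicit. Note also that your two-stage bootstrap for the \emph{first} moment is unnecessary here: after normalizing $\widehat w(0)=0$, the constant trial function gives $E(\lambda,N)\le 0$ directly as in \eqref{eq:leading-upper}, and \eqref{eq:leading-lower} yields $\langle\Psi_N,\cN_+\Psi_N\rangle\le C$ in one step, cf.\ \eqref{eq:bound-cN+}. With these repairs your argument coincides with the paper's proof; the remaining differences (an operator decomposition $\bH_{\rm B}+\mathcal{E}_N$ versus the paper's term-by-term expectation estimates, and your $\varepsilon$-absorption of the cubic terms into the kinetic energy versus the paper's direct Cauchy--Schwarz) are cosmetic once the moment bounds are in hand.
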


For the reader's convenience, we will give a simplified proof of Theorem \ref{thm:GSE} below. Some parts of the proof will be needed for the proof of Theorem \ref{thm:main}. 

Note that by subtracting $w$ by a constant, we can always assume that
\bq
\widehat w (0)=0.
\eq

Let us start with a preliminary result. 

\begin{lemma}[Number of excited particles] \label{lem:cN+} Let $\cN_+=\sum_{p\ne 0} a_p^* a_p$.   If $\Psi_N$ is the ground state of $H_{\lambda,N}$, then 
$$ \langle \Psi_N, \cN_+^k \Psi_N \rangle \le C, \quad k=1,2.$$ 
\end{lemma}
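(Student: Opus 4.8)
The plan is to derive both bounds from the single energy estimate $E(\lambda,N)\le 0$ together with the positivity $\widehat w\ge 0$, reducing everything to a priori control of the kinetic energy $T=\sum_{p\ne 0}|p|^2 a_p^*a_p$. Throughout I assume $\widehat w(0)=0$. First I would introduce the density–fluctuation operators $\hat\rho_\ell=\sum_p a_{p-\ell}^*a_p$, for which normal ordering gives the identity
\[
\sum_{p,q,\ell}\widehat w(\ell)\,a^*_{p-\ell}a^*_{q+\ell}a_p a_q=\sum_\ell \widehat w(\ell)\big(\hat\rho_\ell^*\hat\rho_\ell-\cN\big).
\]
On the $N$-particle sector $\cN=N$, so $H_{\lambda,N}=T+P-C_0$ with $P=\tfrac{\lambda}{2}\sum_{\ell\ne0}\widehat w(\ell)\,\hat\rho_\ell^*\hat\rho_\ell\ge0$ and $C_0=\tfrac{\lambda}{2}\big(\sum_{\ell\ne0}\widehat w(\ell)\big)N$. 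Since $\widehat w\ge0$ one has $\sum_\ell\widehat w(\ell)\le\|w\|_\infty$ (Abel summation), hence $C_0=O(\lambda N)=O(1)$. Testing against the constant function gives $E(\lambda,N)\le0$, so with $\widetilde E:=E(\lambda,N)+C_0\in[0,C_0]$ the eigenvalue relation yields $\langle T\rangle=\widetilde E-\langle P\rangle\le C_0$. As $T\ge(2\pi)^2\cN_+$, this proves $\langle\cN_+\rangle\le C$, i.e.\ the case $k=1$; it also records $\langle P\rangle\le C_0$, whence $\sum_{\ell\ne0}\widehat w(\ell)\langle\hat\rho_\ell^*\hat\rho_\ell\rangle=2\lambda^{-1}\langle P\rangle=O(N)$.

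For $k=2$ I would bootstrap from $k=1$. Since $\cN_+$ and $T$ commute and $T\ge(2\pi)^2\cN_+$, we have $\cN_+ T\ge(2\pi)^2\cN_+^2$, so it suffices to bound $\langle\cN_+ T\rangle$. Writing $T=H_{\lambda,N}+C_0-P$ and using $H_{\lambda,N}\Psi_N=E(\lambda,N)\Psi_N$,
\[
\langle\cN_+ T\rangle=\widetilde E\,\langle\cN_+\rangle-\langle\cN_+ P\rangle .
\]
The first term is $O(1)$ by the case $k=1$, so the whole matter reduces to the lower bound $\langle\cN_+ P\rangle\ge -C$. This cross term is the crux: $P$ is of size $O(N)$ as an operator and does not commute with $\cN_+$, so it cannot simply be dropped. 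I would control it by symmetrising: for any $B$ one has $\{\cN_+,B^*B\}=2B^*\cN_+B+\big([\cN_+,B^*]B+\mathrm{h.c.}\big)$, and discarding the nonnegative term $B^*\cN_+B$ gives $\langle\{\cN_+,B^*B\}\rangle\ge-2\big|\langle[\cN_+,B^*]B\rangle\big|$. Applying this with $B=\hat\rho_\ell$ and summing against $\tfrac{\lambda}{2}\widehat w(\ell)$ produces
\[
\langle\cN_+ P\rangle=\tfrac12\langle\{\cN_+,P\}\rangle\ge-\tfrac{\lambda}{2}\sum_{\ell\ne0}\widehat w(\ell)\,\big|\langle[\cN_+,\hat\rho_\ell^*]\hat\rho_\ell\rangle\big|.
\]

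The decisive simplification, special to the homogeneous problem, is that $[\cN_+,\hat\rho_\ell^*]=a_\ell^*a_0-a_0^*a_{-\ell}=:X_\ell$ contains only condensate–excited terms: the obstruction to $\cN_+$ being conserved lives entirely in the condensate channel. By Cauchy–Schwarz, first in Fock space and then in $\ell$, the right-hand side is bounded by
\[
\tfrac{\lambda}{2}\Big(\sum_{\ell\ne0}\widehat w(\ell)\,\|X_\ell^*\Psi_N\|^2\Big)^{1/2}\Big(\sum_{\ell\ne0}\widehat w(\ell)\,\|\hat\rho_\ell\Psi_N\|^2\Big)^{1/2}.
\]
Each $a_0,a_0^*$ contributes a factor $\sqrt N$, so using $a_0^*a_0\le N$ one gets $\|X_\ell^*\Psi_N\|^2\le CN(\langle a_\ell^*a_\ell\rangle+\langle a_{-\ell}^*a_{-\ell}\rangle+1)$; the first sum is then $O(N)$ by the $k=1$ bound and $\sum_\ell\widehat w(\ell)<\infty$, while the second sum equals $2\lambda^{-1}\langle P\rangle=O(N)$. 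The prefactor converts $\lambda\cdot O(N)=O(\lambda N)=O(1)$: this is exactly the mechanism by which $\lambda\sim N^{-1}$ tames the condensate, the $\sqrt N$ from each $a_0^{(*)}$ in the commutator being compensated by the coupling. Collecting the estimates gives $\langle\cN_+ P\rangle\ge-C$ and hence $\langle\cN_+^2\rangle\le C$. The main obstacle is precisely this second moment, i.e.\ keeping the dangerous cross term $\langle\cN_+ P\rangle$ under control; everything else is the clean energy argument of the $k=1$ step. Note that this route is entirely a priori and does not use the convergence $U_N\Psi_N\to\Phi^{(1)}$, which would be circular here.
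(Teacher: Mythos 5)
Your proof is correct, and for the second moment it takes a genuinely different route from the paper. The $k=1$ step is identical in substance: both arguments combine the trial bound $E(\lambda,N)\le 0$ with the positivity $\widehat w\ge 0$, and your decomposition $H_{\lambda,N}=T+P-C_0$ with $P=\frac{\lambda}{2}\sum_{\ell\ne 0}\widehat w(\ell)\,\hat\rho_\ell^*\hat\rho_\ell\ge 0$ is exactly the second-quantized form of the paper's rewriting of $\sum_{j<k}w(x_j-x_k)$ as a sum of squares of density fluctuations. For $k=2$, however, the paper sandwiches the eigenvalue equation, using $\langle\cN_+(H_{\lambda,N}-E)\cN_+\rangle=-\langle\cN_+[H_{\lambda,N},\cN_+]\rangle$, expands the commutator mode by mode, and thereby controls $\langle\cN_+^3\rangle$ as well, at the price of a self-improving absorption step (the error $CN^{-1/2}\langle\cN_+^3\rangle^{1/2}\langle\cN_+^2\rangle^{1/2}$ must be absorbed into $(2\pi)^2\langle\cN_+^3\rangle$, using $\langle\cN_+^2\rangle\le\langle\cN_+\rangle^{1/2}\langle\cN_+^3\rangle^{1/2}$). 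You instead multiply by $\cN_+$ only once, via $\langle\cN_+(H_{\lambda,N}-E)\rangle=0$, keep the interaction manifestly nonnegative through the $\hat\rho_\ell$'s, and reduce the entire estimate to the single commutator $[\cN_+,\hat\rho_\ell^*]=a_\ell^*a_0-a_0^*a_{-\ell}$ combined with the anticommutator positivity trick; this avoids third moments altogether and is leaner for the lemma as stated, whereas the paper's version yields $k=3$ (and, as remarked there, all $k$) from the same machinery — your scheme would also iterate with $\cN_+^{k-1}$ in place of $\cN_+$, though you do not need this. Two points deserve an explicit line in a write-up but are not gaps: the symmetrization $\langle\cN_+P\rangle=\tfrac12\langle\{\cN_+,P\}\rangle$ presupposes that $\langle\cN_+P\rangle$ is real, which follows from your own identity $\langle\cN_+P\rangle=\widetilde E\,\langle\cN_+\rangle-\langle\cN_+T\rangle$ since $\cN_+$ and $T$ commute (and all manipulations are domain-safe because $\cN_+\le N$ is bounded on $\gH^N$); and the summability $\sum_\ell\widehat w(\ell)\le\|w\|_{L^\infty}$ indeed holds by Fej\'er summation together with $\widehat w\ge 0$, which is the same input the paper invokes through $w(0)$.
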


\begin{proof} First, the variational principle gives us immediately the upper bound
\bq \label{eq:leading-upper}
 \langle \Psi_N, H_{\lambda,N} \Psi_N \rangle = E_{\lambda,N} \le \langle 1, H_{\lambda,N} 1\rangle =0.
 \eq
On the other hand, since $\widehat w\ge 0$ we have
\begin{align*}
\sum_{1\le j<k \le N} w(x_j-x_k) &= \frac{1}{2}\sum_{j,k=1}^N w(x_j-x_k) -  \frac{N}{2} w(0)  \\
&=\sum_{p\in (2\pi \mathbb{Z})^d} \frac{\widehat w(p)}{2} \left| \sum_{j=1}^N e^{ipx_j}\right|^2 -  \frac{N}{2} w(0) \ge -  \frac{N}{2} w(0) .
\end{align*}
Thus 
\bq \label{eq:leading-lower}
H_{\lambda,N} \ge \sum_{p\in (2\pi\bZ)^d} |p|^2 a_p^* a_p - \frac{\lambda N}{2}w(0) \ge (2\pi)^{2d} \cN_+ -C.
\eq
From \eqref{eq:leading-upper} and \eqref{eq:leading-lower}, we obtain
\bq \label{eq:bound-cN+}
|E(\lambda,N)| \le C \quad \text{and} \quad\langle \Psi_N, \cN_+ \Psi_N \rangle \le C.
\eq

Now we are going to control $\cN_+^2$. Since $\Psi_N$ is a ground state of $H_{\lambda,N}$, it solves the Schr\"odinger equation 
$$ H_{\lambda,N}\Psi_N= E(\lambda,N)\Psi_N.$$
Consequently, we get the identity 
\begin{align} \label{eq:identity-cN+-HN}
\left\langle \Psi_N, \cN_+ \Big( H_{\lambda,N} - E(\lambda,N)  \Big) \cN_+ \Psi_N \right\rangle = - \Big\langle \Psi_N,   \cN_+ [H_{\lambda,N}, \cN_+]   \Psi_N \Big\rangle.
\end{align}
The left side of \eqref{eq:identity-cN+-HN} can be estimated using \eqref{eq:leading-lower} and \eqref{eq:bound-cN+} as
\bq \label{eq:identity-cN+-HN-left}
 \left\langle \Psi_N, \cN_+ \Big( H_{\lambda,N} - E(\lambda,N)  \Big) \cN_+ \Psi_N \right\rangle \ge \left\langle \Psi_N, \Big( (2\pi)^{2}   \cN_+^3 -C\cN_+^2 \Big) \Psi_N \right\rangle.
\eq
For the right side of \eqref{eq:identity-cN+-HN}, we use \eqref{eq:grand-Hamiltonian} and the CCR \eqref{eq:CCR} to write 
\begin{align} \label{eq:identity-cN+-HN-right-0}
 \cN_+[H_{\lambda,N}, \cN_+] & =\frac{\lambda}{2}  \sum_{\ell \ne 0} \sum_{p,q } \widehat w(\ell) \cN_+[a^*_{p-\ell}a^*_{q+\ell}a_p a_q, \cN_+] \nn\\
&=\frac{\lambda}{2} \sum_{\ell \ne 0} \widehat w(\ell) \cN_+ \Big( 2 a^*_0 a^*_0 a_{\ell} a_{-\ell} -2 a^*_{-\ell} a^*_{\ell} a_0 a_0   \Big) \nn\\
& + \frac{\lambda}{2} \sum_{\ell \ne 0 \ne p \ne \ell} \widehat w(\ell) \cN_+ \Big(a^*_{p-\ell}a^*_{0}a_p a_{-\ell}  - a^*_{p-\ell}a^*_{\ell}a_p a_0  \Big) \nn\\
& + \frac{\lambda}{2} \sum_{\ell \ne 0 \ne q \ne -\ell} \widehat w(\ell) \cN_+ \Big(a^*_{0}a^*_{q+\ell}a_{\ell} a_{q}  - a^*_{-\ell}a^*_{q+\ell}a_0 a_q  \Big).
\end{align}
Now we take the expectation against $\Psi_N$ and estimate. For the first term on the right side of \eqref{eq:identity-cN+-HN-right-0}, by the Cauchy--Schwarz inequality, we get
\begin{align*}
& \left| \left\langle \Psi_N, \sum_{\ell \ne 0} \widehat w(\ell) \cN_+  a^*_0 a^*_0 a_{\ell} a_{-\ell} \Psi_N \right\rangle \right| \le \sum_{\ell \ne 0}   \left\|  a_0 a_0 \cN_+ \Psi_N \right\| | \widehat w(\ell)| \left\| a_{\ell} a_{-\ell} \Psi_N\right\| \\
&\le \left\|  a_0 a_0 \cN_+ \Psi_N \right\|   \left( \sum_{\ell \ne 0} |\widehat w(\ell)|^2 \right)^{1/2} \left( \sum_{\ell \ne 0}  \left\| a_{\ell} a_{-\ell} \Psi_N\right\| ^2 \right)^{1/2} \le C N \langle \Psi_N, \cN_+^2 \Psi_N\rangle.
\end{align*}
Here we have used $a_0^*a_0 \le N$ on $\gH^N$ and $\sum |\widehat w(\ell)|^2=\|w\|_{L^2}^2<\infty$. Similarly, for the second term, we have 
\begin{align*}
&\left| \left\langle \Psi_N, \sum_{\ell \ne 0} \widehat w(\ell) \cN_+  a^*_{-\ell} a^*_\ell a_0 a_0 \Psi_N \right\rangle \right| = \left| \left\langle \Psi_N, \sum_{\ell \ne 0} \widehat w(\ell)   a^*_{-\ell} a^*_\ell  (\cN_++2)  a_0 a_0\Psi_N \right\rangle \right| \\
&\le \left\| (\cN_+ +2) a_0 a_0   \Psi_N \right\|   \left( \sum_{\ell \ne 0} |\widehat w(\ell)|^2 \right)^{1/2} \left( \sum_{\ell \ne 0}  \left\| a_{\ell} a_{-\ell} \Psi_N\right\| ^2 \right)^{1/2} \\
&\le C N \langle \Psi_N, (\cN_+ +2)^2 \Psi_N\rangle.
\end{align*}
For the third term, we can bound
\begin{align*}
&\left| \left\langle \Psi_N, \sum_{\ell \ne 0 \ne p \ne \ell}  \widehat w(\ell) \cN_+ a^*_{p-\ell}a^*_{0}a_p a_{-\ell}  \Psi_N \right\rangle \right| \\ 
&\le \sum_{\ell \ne 0 \ne p \ne \ell} |\widehat w(\ell)| \| a_{0} a_{p-\ell} \cN_+ \Psi_N\| \|a_p a_{-\ell}  \Psi_N \|  \\
&\le \left( \sum_{\ell \ne 0 \ne p \ne \ell}  |\widehat w(\ell)|^2 \left\| a_{0} a_{p-\ell} \cN_+ \Psi_N \right\|^2  \right)^{1/2} \left(   \sum_{\ell \ne 0 \ne p \ne \ell} \left\| a_p a_{-\ell}  \Psi_N \right\|^2 \right)^{1/2} \\
& \le C N^{1/2}\langle \Psi_N, \cN_+^3 \rangle^{1/2} \langle \Psi_N, \cN_+^2 \Psi_N\rangle^{1/2}
\end{align*}
and proceed similarly for other terms. Thus in summary, from \eqref{eq:identity-cN+-HN-right-0} we get
\begin{align} \label{eq:identity-cN+-HN-right}
\left| \Big\langle \Psi_N, \cN_+[H_{\lambda,N},\cN_+] \Psi_N \Big\rangle \right| &\le C \langle \Psi_N, (\cN_+ +2)^2 \Psi_N\rangle \nn\\
&\quad + C N^{-1/2}\langle \Psi_N, \cN_+^3 \rangle^{1/2} \langle \Psi_N, \cN_+^2 \Psi_N\rangle^{1/2}.
\end{align}
Here we have used $\lambda \sim N^{-1}$. Inserting \eqref{eq:identity-cN+-HN-left} and \eqref{eq:identity-cN+-HN-right} into \eqref{eq:identity-cN+-HN}, we obtain
\begin{align*}
\left\langle \Psi_N, \Big( (2\pi)^{2}   \cN_+^3 - C\cN_+^2 \Big) \Psi_N \right\rangle &\le C \langle \Psi_N, (\cN_+ +2)^2 \Psi_N\rangle \nn\\
&\quad + C N^{-1/2}\langle \Psi_N, \cN_+^3 \rangle^{1/2} \langle \Psi_N, \cN_+^2 \Psi_N\rangle^{1/2}.
\end{align*}
By the Cauchy-Schwarz inequality and the a-priori estimate $\langle \Psi_N, \cN_+ \Psi_N\rangle \le C$ in \eqref{eq:bound-cN+}, we can conclude that $\langle \Psi_N, \cN_+^k \Psi_N \rangle \le C$ for $k=1,2,3.$ 
\end{proof}

\begin{remark}
By adapting the above proof we can show that $\langle \Psi_N, \cN_+^k \Psi_N \rangle\le C_k$ for all $k\in \mathbb{N}$. However, having control $\cN_+^2$ is sufficient for our purpose.
\end{remark}

\begin{proof}[Proof of Theorem \ref{thm:GSE}] Let $\Psi_N$ be a ground state for $H_N$ and let $\Phi_N=U_N \Psi_N$. We will prove that 
\begin{align} \label{eq:Bog-appro-rigorous}
\langle \Psi_N, H_{\lambda,N} \Psi_N \rangle = \langle \Phi_N, \bH_{\rm B}\Phi_N \rangle + o(1). 
\end{align}
As explained, we can assume $\widehat w (0)=0$. Thus 
$$
\langle \Psi_N, H_{\lambda,N}\Psi_N\rangle = \sum_{p \ne 0} \langle \Psi_N, |p|^2 a_p^* a_p \Psi_N\rangle  + \frac{\lambda}{2} \sum_{\ell\ne 0}\sum_{p,q} \widehat w(\ell) \langle \Psi_N, a^*_{p-\ell}a^*_{q+\ell}a_p a_q \Psi_N\rangle.
$$

Note that in the interaction terms, if $\ell\ne0$, then the number of zeros among 4 numbers $p,q,p+\ell,q-\ell$ is either 0,1 or 2.  

When all numbers $p,q,p+\ell,q-\ell$ are non-zero, we have 
\begin{align*} %\label{eq:lw-1}
&\frac{\lambda}{2}\left| \left\langle \Psi_N, \sum_{p \ne 0, p\ne \ell, q\ne 0, q\ne -\ell} \widehat w(\ell) a^*_{p-\ell}a^*_{q+\ell}a_p a_q \right\rangle \right| \nn\\
&= \lambda \left| \left\langle \Psi_N, \sum_{1\le j \le N} (Q\otimes Q w Q\otimes Q)_{jk} \Psi_N\right\rangle \right| \nn\\
& \le C \lambda \|w\|_{L^\infty} \langle \Psi_N, \cN_+^2 \Psi_N \rangle  \le CN^{-1}. 
\end{align*}
Here we have used $\sum_{j=1}^N Q_j=\cN_+$ on $\gH^N$ and Lemma \ref{lem:cN+}.

When there is exactly one zero among $p-\ell, q+\ell, p, q$, for example $q=0$, we have 
\begin{align} \label{eq:R3}
&\lambda \left| \sum_{p \ne 0 \ne \ell \ne p} \widehat w(\ell) \langle \Psi_N, a^*_{p-\ell}a^*_{\ell}a_p a_0 \Psi_N \rangle \right| \nn\\
& \le \lambda \left( \sum_{p \ne 0\ne \ell \ne p}  \| a_{p-\ell} a_{\ell} \Psi_N \|^2  \right)^2 \left(  \sum_{p \ne 0 \ne \ell \ne p} |\widehat w(\ell)|^2 \| a_p a_0 \Psi_N\|^2 \right)^2 \nn \\
& \le C \lambda \langle \Psi_N, \cN_+ \Psi_N \rangle^{1/2} \langle \Psi_N, \cN \cN_+ \Psi_N\rangle^{1/2} \le CN^{-1/2}.
\end{align}

Thus only the interactions terms with two zeros among $p-\ell, q+\ell, p, q$ have important energy contribution. In other words, we have proved that 
\begin{align} \label{eq:lw-0}
\langle \Psi_N, H_{\lambda,N} \Psi_N \rangle &= \sum_{p\ne 0} \left\langle \Psi_N,  |p|^2 a_p^* a_p \Psi_N \right\rangle + O(N^{-1/2})\\
& + \frac{\lambda}{2} \sum_{p\ne 0} \widehat w(p) \left\langle \Psi_N, (2a^*_{p} a_p  a_0^*a_0+ a^*_{p} a^*_{-p} a_0 a_0 + a_0^* a_0^* a_{p} a_{-p}) \Psi_N \right\rangle.\nn
\end{align}

Now using $\Phi_N=U_N \Psi_N$ and \eqref{eq:action}, we can translate \eqref{eq:lw-0} into
\begin{align}  \label{eq:lw-1}
& \langle \Psi_N, H_{\lambda,N} \Psi_N \rangle =  \langle \Phi_N, \bH_{\rm B}\Phi_N\rangle + \sum_{p\ne 0} \widehat w(p) \left\langle \Phi_N,  a_p^* a_p \Big(\lambda N- \lambda \cN_+ -1 \Big)   \Phi_N \right\rangle \nn\\
&+  \Re \sum_{p \ne 0} \widehat w(p)  \left\langle \Phi_N, a_p^* a_{-p}^* \Big( \lambda\sqrt{(N-\cN_+)(N-\cN_+-1)} -1 \Big) \Phi_N \right\rangle + O(N^{-1/2}). 
\end{align}

Next, from \eqref{eq:action} and Lemma \ref{lem:cN+} we have
\bq \label{eq:lem:cN+}
\langle \Phi_N, \cN_+^k \Phi_N\rangle = \langle \Psi_N, \cN_+^k \Psi_N\rangle\le C,\quad k=1,2.
\eq
Therefore, using $0\le \widehat w(p) \le C$ and $\lambda N \to 1$, we can estimate 
\begin{align} \label{eq:lw-1a}
&\left|  \sum_{p\ne 0} \widehat w(p) \left\langle \Phi_N,  a_p^* a_p \Big(\lambda N- \lambda \cN_+ -1 \Big)   \Phi_N \right\rangle \right| \nn\\
& \le  \sum_{p\ne 0} \widehat w(p) \left\langle \Phi_N,  a_p^* a_p \Big( |\lambda N-1| + CN^{-1} \cN_+ \Big) \Phi_N \right\rangle \nn\\
&\le  C |\lambda N-1| \langle \Phi_N, \cN_+ \Phi_N\rangle + C N^{-1} \langle \Phi_N, \cN_+^2 \Phi_N\rangle = o(1).
\end{align}
Moreover, using the operator inequality 
\begin{align*}
 \Big|\lambda \sqrt{(N-\cN_+)(N-\cN_+-1)}-1\Big|^2  &\le \left| \lambda^2 (N-\cN_+)(N-\cN_+-1) -1 \right|^2 \\
&\le C \Big( |\lambda^2 N^2-1|^2 + (\cN_++1)^2 \Big)
\end{align*}
on $\cF_+^{\le N}$ and the Cauchy--Schwarz inequality, we can estimate 
\begin{align} \label{eq:lw-1b}
& \left| \sum_{p\ne 0} \widehat w(p) \left  \langle \Phi_N, a_p^* a_{-p}^*\Big( \lambda\sqrt{(N-\cN_+)(N-\cN_+-1)}-1\Big) \Phi_N \right\rangle\right| \nn\\
& \le  \left( \sum_{p\ne 0} \| a_p a_{-p} \Phi_N\|^2 \right)^{1/2} \left( \sum_{p\ne 0} |\widehat w(p)|^2  \right)^{1/2} \times \nn\\
&\qquad \qquad \qquad \times \Big\| \Big( \lambda \sqrt{(N-\cN_+)(N-\cN_+-1)}-1 \Big) \Phi_N \Big\|\nn \\
&\le C \left\langle \Phi_N, \cN_+^2 \Phi_N \right\rangle^{1/2} \left\langle \Phi_N, \Big( |\lambda^2 N^2-1|^2 + (\cN_++1)^2 \Big)  \Phi_N \right\rangle = o(1).
\end{align}

Inserting \eqref{eq:lw-1a} and \eqref{eq:lw-1b} into \eqref{eq:lw-1}, we obtain \eqref{eq:Bog-appro-rigorous}. 

From \eqref{eq:Bog-appro-rigorous}, we deduce immediately the lower bound
\bq \label{eq:lw-ene}
E(\lambda,N)=\langle \Psi_N, H_N \Psi_N \rangle = \langle \Phi_N, \bH_{\rm B}\Phi_N \rangle + o(1) \ge e_{\rm B} + o(1).
\eq

Now we turn to the upper bound. Recall that the ground state $\Phi^{(1)}$ of $\bH_{\rm B}$ satisfies \eqref{eq:Phi1}-\eqref{eq:Bog-trans}. It is straightforward to check that if $\1^{\le N}$ is the projection onto $\cF_+^{\le N}$, then
$$
\| \1^{\le N} \Phi^{(1)}\| \to 1, \quad \langle \1^{\le N} \Phi^{(1)} ,  \bH_{\rm B}\1^{\le N} \Phi^{(1)} \rangle \to e_{\rm H}
$$
and
$$
\langle \1^{\le N} \Phi^{(1)}, \cN_+^k \1^{\le N} \Phi^{(1)} \rangle \le \langle  \Phi^{(1)}, \cN_+^k  \Phi^{(1)} \rangle \le C, \quad k=1,2.
$$
Therefore, by repeating the proof of  \eqref{eq:Bog-appro-rigorous} with $\Psi_N$ replaced by $
\widetilde \Psi_N = U_N^* \widetilde \Phi_N$ we obtain the upper bound
\bq \label{eq:up-ene}
E(\lambda,N) \le \frac{\langle \widetilde \Psi_N, H_N \widetilde \Psi_N \rangle}{\|\widetilde \Psi_N\|^2} = \frac{\langle \widetilde \Phi_N, \bH_{\rm B}\widetilde \Phi_N \rangle + o(1)}{\|\widetilde \Phi_N\|^2} = e_{\rm B}+ o(1).
\eq

Combining \eqref{eq:lw-ene} and \eqref{eq:up-ene}, we conclude that \eqref{eq:EN-GSE} holds true and  
\bq \label{eq:cv-PhiN}
\langle \Phi_N, \bH_{\rm B}\Phi_N \rangle \to e_{\rm B}.
\eq
Since $\bH_{\rm B}$ has a spectral gap to the ground state energy $e_{\rm B}$, which can be seen from \eqref{eq:Bog-dia}, we deduce from \eqref{eq:cv-PhiN} that  $\Phi_N \to \Phi^{(1)}$ in the quadratic form domain of $\bH_{\rm B}$.
\end{proof}

%\begin{remark} By the same strategy, we can obtain the convergence of higher eigenvalues and eigenfunctions of $H_{\lambda,N}$ as in \cite{Seiringer-11,LewNamSerSol-15}. \end{remark}

\section{Binding energy} \label{sec:main-thm}

Now we are ready to provide the

\begin{proof}[Proof of Theorem \ref{thm:main}] Subtracting $w$ by a constant if necessary, we can assume that $\widehat w(0)=0$. Recall that $H_{\lambda,N}$ can be extended to Fock space as
\bq \label{eq:bHl}
\bH_\lambda= \sum_{p} |p|^2 a_p^* a_p + \frac{\lambda}{2} \sum_{\ell\ne 0}\sum_{p,q} \widehat w(\ell) a^*_{p-\ell}a^*_{q+\ell}a_p a_q.
\eq

\noindent {\bf Lower bound.} Let $\Psi_N$ be the ground state of $H_{\lambda,N}$ and denote $\Phi_N=U_N \Psi_N$. Using the Schr\"odinger equation 
$$
\bH_\lambda \Psi_N=  E(\lambda,N)\Psi_N
$$
and the variational inequality
$$
E(\lambda,N-1)\le \frac{\langle a_0 \Psi_N, \bH_\lambda a_0 \Psi_N \rangle}{\|a_0 \Psi_N\|^2}
$$
we obtain the lower bound 
\begin{align} \label{eq:binding-lw-0}
E(\lambda,N)- E(\lambda,N-1)&\ge \frac{\langle a_0^*a_0 \Psi_N, \bH_\lambda \Psi_N \rangle}{\|a_0 \Psi_N\|^2} - \frac{\langle a_0 \Psi_N, \bH_\lambda a_0 \Psi_N \rangle}{\|a_0 \Psi_N\|^2} \nn\\
& = \frac{\langle \Psi_N, [\bH_\lambda,a_0^*] a_0 \Psi_N \rangle}{\|a_0 \Psi_N\|^2} .
\end{align}

Let us estimate the right side of \eqref{eq:binding-lw-0}. By Lemma \ref{lem:cN+}, 
\bq \label{eq:binding-lw-1}
\|a_0 \Psi_N\|^2 =  N - \langle \Phi_N, \cN_+ \Phi_N\rangle = N + O(1).
\eq
Moreover, from \eqref{eq:bHl} and the CCR \eqref{eq:CCR},
\begin{align*}
[\bH_\lambda,a_0^*] a_0 &= \frac{\lambda}{2} \sum_{\ell\ne 0} \sum_{p,q} \widehat w(\ell) a^*_{p-\ell}a^*_{q+\ell}[a_p a_q,a_0^*] a_0 \\
&= \frac{\lambda}{2} \sum_{\ell\ne 0} \sum_{p,q} \widehat w(\ell) a^*_{p-\ell}a^*_{q+\ell}( a_p \delta_{q,0}+ a_q \delta_{p,0}) a_0 \\
&= \lambda  \sum_{p \ne 0} \widehat w(p) \Big( a^*_p a_p a^*_0 a_0 +  a^*_{p} a^*_{-p} a_0 a_0 \Big) +  \lambda \sum_{p\ne 0 \ne \ell \ne p} \widehat w(\ell)  a^*_{p-\ell} a^*_ \ell a_p a_0 .
\end{align*}
Next, we take the expectation again $\Psi_N$. From the estimates \eqref{eq:R3},  \eqref{eq:lw-1a} and \eqref{eq:lw-1b} in the proof of Theorem \ref{thm:GSE}, we have
$$\lambda  \left|  \sum_{p\ne 0 \ne \ell \ne p} \widehat w(\ell) \left\langle \Psi_N, a^*_{p-\ell} a^*_ \ell a_p a_0  \Psi_N \right\rangle \right| \le  CN^{-1/2}
$$
and
\begin{align*}
& \lambda \sum_{p \ne 0} \widehat w(p) \left\langle \Psi_N, \Big( a_p^* a_p a_0^* a_0 + a^*_{p} a^*_{-p} a_0 a_0 \Big) \Psi_N \right\rangle \\
&=  \sum_{p \ne 0} \widehat w(p) \left\langle \Phi_N, \Big(  a^*_{p} a_{p}  +  a^*_{p} a^*_{-p}  \Big) \Phi_N \right\rangle + o(1).
\end{align*}
Therefore,
\begin{align} \label{eq:binding-lw-2a}
\langle \Psi_N, [\bH_\lambda,a_0^*] a_0 \Psi_N \rangle  = \sum_{p \ne 0} \widehat w(p) \left\langle \Phi_N, \Big( a^*_{p} a_{p} +   a^*_{p} a^*_{-p}    \Big) \Phi_N \right\rangle + o(1) .
\end{align}
From \eqref{eq:binding-lw-0}, we know that $\langle \Psi_N, [\bH_\lambda,a_0^*] a_0 \Psi_N \rangle$ is a real number. Therefore, we can rewrite \eqref{eq:binding-lw-2a} as
\begin{align}  \label{eq:binding-lw-2b}
\langle \Psi_N, [\bH_\lambda,a_0^*] a_0 \Psi_N \rangle &= \sum_{p \ne 0} \widehat w(p) \left\langle \Phi_N, \Big(   a^*_{p} a_{p} + \frac{1}{2}a^*_{p} a^*_{-p} +  \frac{1}{2}a_{p} a_{-p} \Big) \Phi_N \right\rangle + o(1)\nn\\
&= \langle  \Phi_N, \bH_{\rm B} \Phi_N \rangle - \sum_{p\ne 0} \langle \Phi_N, |p|^2 a_p^*a_p \Phi_N\rangle + o(1).
\end{align}

Recall that 
$$\langle  \Phi_N, \bH_{\rm B} \Phi_N \rangle \to e_{\rm B}$$
(see \eqref{eq:cv-PhiN}) and $\Phi_N$ converges to the ground state $\Phi^{(1)}$ of $\bH_{\rm B}$ in the quadratic form domain of $\bH_{\rm B}$. On the other hand, by a calculation similar to \eqref{eq:Bog-dia} with $w$ replaced by $2w$, we have
\begin{align*}
&\frac{1}{2}\sum_{p\ne 0} |p|^2 a_p^* a_p + \sum_{p \ne 0} \widehat w(p) \Big(   a^*_{p} a_{p} + \frac{1}{2}a^*_{p} a^*_{-p} +  \frac{1}{2}a_{p} a_{-p} \Big) \\
&\ge -  \frac{1}{4} \sum_{p\in (2\pi \mathbb{Z})^d \minus \{0\}} \left( |p|^2 + 2\widehat w(p) - \sqrt{|p|^4+4 |p|^2 \widehat w(p)}\right) \ge -C.
\end{align*}
Therefore,
$$
\bH_{\rm B} \ge \frac{1}{2}\sum_{p\ne 0} |p|^2 a_p^* a_p  - C.
$$
Thus $\sum_{p\ne 0} p^2 a_p^* a_p$ is  bounded relatively to $\bH_{\rm B}$, and hence
$$
\sum_{p\ne 0} \langle \Phi_N, |p|^2 a_p^*a_p \Phi_N\rangle \to \sum_{p\ne 0} \langle \Phi^{(1)}, |p|^2 a_p^*a_p \Phi^{(1)}\rangle.
$$
Finally, using \eqref{eq:Phi1} and \eqref{eq:Bog-trans}, it is straightforward to compute 
\begin{align*}
& \sum_{p\ne 0} \big\langle \Phi^{(1)}, |p|^2 a_p^*a_p \Phi^{(1)}\big\rangle = \sum_{p\ne 0} |p|^2 \big\langle 0\big| U_{\rm B}^* a_p^*a_p U_{\rm B}\big| 0 \big\rangle \\
&= \sum_{p\ne 0}  |p|^2 \left\langle 0\left| \left( \frac{a_p-\alpha_p a_{-p}^*}{\sqrt{1-\alpha_p^2}} \right)^* \left( \frac{a_p-\alpha_p a_{-p}^*}{\sqrt{1-\alpha_p^2}} \right) \right| 0 \right\rangle = \sum_{p\ne 0}  \frac{|p|^2 \alpha_p^2}{1-\alpha_p^2} .
\end{align*}
Here the sum is finite because $|p|^2\alpha_p^2 \le |\widehat w(p)|^2$. In summary, from \eqref{eq:binding-lw-2b}  we deduce that 
\begin{align} \label{eq:binding-lw-2}
\langle \Psi_N, [\bH_\lambda,a_0^*] a_0 \Psi_N \rangle = e_{\rm B} - \sum_{p\ne 0}  \frac{|p|^2 \alpha_p^2}{1-\alpha_p^2}  + o(1).
\end{align}

Inserting \eqref{eq:binding-lw-1} and \eqref{eq:binding-lw-2} into \eqref{eq:binding-lw-0}, we arrive that the lower bound
\begin{align} \label{eq:lw}
E(\lambda,N)- E(\lambda,N-1) \ge \frac{1}{N} \left( e_{\rm B} - \sum_{p\ne 0}  \frac{|p|^2 \alpha_p^2}{1-\alpha_p^2} + o(1)\right).
\end{align}

\noindent{\bf Upper bound.} Let $\Psi_{N-1}$ be the the ground state of $H_{\lambda,N-1}$. By the variational principle, we have
\begin{align} \label{eq:up-0}
E(\lambda,N)- E(\lambda,N-1)&\le \frac{\Big\langle a_0^* \Psi_{N-1}, \bH_\lambda a_0^* \Psi_{N-1}\Big\rangle}{\|a_0^* \Psi_{N-1}\|^2} - \frac{\Big\langle a_0 a_0^* \Psi_{N-1}, \bH_\lambda \Psi_{N-1}\Big\rangle}{\|a_0^* \Psi_{N-1}\|^2}   \nn\\
&= \frac{\langle \Psi_{N-1}, a_0 [\bH_\lambda,a_0^*] \Psi_{N-1} \rangle}{\|a_0^* \Psi_{N-1}\|^2} .
\end{align}
Using Lemma \ref{lem:cN+} with $\Psi_N$ replaced by $\Psi_{N-1}$, we get
\bq \label{eq:up-1}
\|a_0^* \Psi_{N-1}\|^2 = 1 + \| a_0 \Psi_{N-1}\|^2 = N - \langle \Psi_{N-1}, \cN_+ \Psi_{N-1} \rangle = N + O(1).
\eq
Next, we use the identity
\begin{align} \label{eq:up-2a}
a_0 [\bH_\lambda,a_0^*] - [\bH_\lambda,a_0^*] a_0 &=  [a_0,[\bH_\lambda,a_0^*]]  \nn\\
&= \frac{\lambda}{2} \sum_{\ell\ne 0} \sum_{p,q} w(\ell) [a_0,a^*_{p-\ell}a^*_{q+\ell}] \cdot [a_p a_q,a_0^*] \nn\\
&= \frac{\lambda}{2} \sum_{\ell\ne 0} \sum_{p,q} w(\ell) ( a^*_{p-\ell} \delta_{q+\ell,0} + a^*_{q+\ell} \delta_{p-\ell,0})( a_p \delta_{q,0}+ a_q \delta_{p,0}) \nn\\
&= \frac{\lambda}{2}  \sum_{p \ne 0} \widehat w(p) \Big( a_p^* a_p + a^*_{p} a^*_{-p} \Big)
\end{align}
and take the expectation against $\Psi_{N-1}$. Using \eqref{eq:binding-lw-2a} and \eqref{eq:binding-lw-2} with $\Psi_N$ replaced by $\Psi_{N-1}$ (and $\Phi_N$ replaced by $\Phi_{N-1}=U_{N-1}\Psi_{N-1}$), we have
\begin{align*}
\langle \Psi_{N-1}, [\bH_\lambda,a_0^*] a_0 \Psi_{N-1}\rangle &= \sum_{p \ne 0} \widehat w(p) \left\langle \Phi_{N-1}, \Big( a_p^* a_p + a^*_{p} a^*_{-p} \Big) \Phi_{N-1}\right\rangle + o(1) \\
&=  e_{\rm B} - \sum_{p\ne 0}  \frac{|p|^2 \alpha_p^2}{1-\alpha_p^2} + o(1).
\end{align*}
Therefore, from \eqref{eq:up-2a} it follows that
\begin{align} \label{eq:up-2}
\langle \Psi_{N-1}, a_0 [\bH_\lambda,a_0^*] \Psi_{N-1}\rangle &= \langle \Psi_{N-1}, [\bH_\lambda,a_0^*] a_0 \Psi_{N-1}\rangle \nn\\
&\quad + \frac{\lambda}{2} \sum_{p \ne 0} \widehat w(p) \left\langle \Phi_{N-1}, \Big( a_p^* a_p + a^*_{p} a^*_{-p} \Big) \Phi_{N-1}\right\rangle \nn\\
&= \Big( 1 +\frac{\lambda}{2} \Big) \Big( e_{\rm B} - \sum_{p\ne 0}  \frac{|p|^2 \alpha_p^2}{1-\alpha_p^2} + o(1) \Big) \nn\\
&= e_{\rm B} - \sum_{p\ne 0}  \frac{|p|^2 \alpha_p^2}{1-\alpha_p^2} + o(1).
\end{align}
Here we have used $\lambda \sim N^{-1}$ in the latter estimate. 

Inserting \eqref{eq:up-1} and \eqref{eq:up-2} into \eqref{eq:up-0} we obtain the upper bound
\begin{align} \label{eq:up}
E(\lambda,N)- E(\lambda,N-1)\le \frac{1}{N}\left( e_{\rm B} - \sum_{p\ne 0}  \frac{|p|^2 \alpha_p^2}{1-\alpha_p^2}+ o(1)\right).
\end{align} 
The desired result \eqref{eq:main-result} follows from \eqref{eq:lw} and \eqref{eq:up}. 
\end{proof}

\section{Extension} \label{sec:general}

It is natural to ask for an extension of Theorem \ref{thm:main} to the more general case. To be precise, let us consider a system of $N$ bosons in $\Omega\subset \R^d$ described by the Hamiltonian
$$
H_{\lambda,N} =\sum_{i=1}^N T_i +\lambda \sum_{i<j}^N w(x_i-x_j)
$$
on $L^2_{\rm sym}(\Omega^N)$. Here $T>0$ is a self-adjoint operator on $L^2(\Omega)$, $w:\R^d \to \R$ is an even function  bounded relatively to $T$ and $\lambda\sim N^{-1}$. A typical example is $\Omega=\R^3$, $T=-\Delta+V(x)$ is $w(x)=|x|^{-1}$.

In principle, the condensate $u_0$ is the minimizer of the Hartree functional
$$
\cE_{\rm H} (u)= \langle u, Tu \rangle + \frac{\lambda (N-1)}{2}\iint |u(x)|^2 w(x-y) |u(y)|^2 \d x \d y .
$$
Consequently, it solves the Euler-Lagrange equation
\bq \label{eq:EL}
(T+|u_0|^2*w-\mu_{\rm H})u_0=0
\eq
with some constant $\mu_{\rm H}\in \mathbb{R}$ (the chemical potential).  

Following the strategy in Section \ref{sec:main-thm}, we extend $H_{\lambda,N}$ to Fock space as
$$
\bH_{\lambda} = \sum_{m,n} \langle u_m, T u_n \rangle a_m^* a_n + \frac{1}{2}\langle u_m \otimes u_n, w u_p \otimes u_q\rangle a_m^* a_n^* a_p a_q
$$
where $\{u_n\}_{n=0}^\infty$ is an orthonormal basis for $L^2(\Omega)$ and $a_n=a(u_n)$. Similarly to \eqref{eq:binding-lw-0}, we have
\begin{align} \label{eq:EN-EN-1>abc}
E(\lambda,N)- E(\lambda,N-1)\ge \frac{\langle \Psi_N, [\bH_\lambda,a_0^*] a_0 \Psi_N \rangle}{\|a_0 \Psi_N\|^2}. 
\end{align}
Note that $\|a_0 \Psi_N\|^2=N+o(N)$ because $u_0$ is the condensate. However, if we calculate $\langle \Psi_N, [\bH_\lambda,a_0^*] a_0 \Psi_N \rangle$, then beside the terms similar to the homogeneous case, we have the new terms
$$
A= \sum_{n\ne 0} \Big( \langle u_n , T u_0\rangle \langle \Psi_N, a_n^* a_0 \Psi_N\rangle + 2\lambda \langle u_n,(|u_0|^2*w) u_0 \rangle \langle \Psi_N, a_n^* a_0^* a_0 a_0 \Psi_N\rangle\Big).
$$

It is not clear to us whenever $A$ is small, namely of order $o(1)$. Heuristically, if we use Bogoliubov's c-number substitution and equation \eqref{eq:EL}, then we arrive at the approximation  
\bq \label{eq:B}
A \approx  - \sqrt{N} \sum_{n\ne 0} \langle u_n , T u_0\rangle \langle \Phi_N, a_n^* \Phi_N\rangle.
\eq
where $\Phi_N=U_N \Psi_N$ with $U_N$ defined in \eqref{eq:UN}. On the right side of \eqref{eq:B}, if we replace $\Phi_N$ by the ground state $\Phi^{(1)}$ of the Bogoliubov Hamilonian, then we obtain $0$ since $\Phi^{(1)}$ is a quasi-free state.  However, to prove rigorously that $A$ is of order $o(1)$, we need a very strong quantitative estimate for the convergence $\Phi_N\to \Phi^{(1)}$, which seems out of reach for existing methods.

\end{document}